\newtheorem{theorem}{Theorem}
\newtheorem{example}{Example}
\newtheorem{definition}{Definition}
\newcommand{\calV}{\mathcal{V}}
\newcommand{\calU}{\mathcal{U}}
\newcommand{\calN}{\mathcal{N}}
\newcommand{\calP}{\mathcal{P}}
\newcommand{\calS}{\mathcal{S}}
\newcommand{\calD}{\mathcal{D}}
\newcommand{\calT}{\mathcal{T}}
\newcommand{\calH}{\mathcal{H}}
\newcommand{\bfV}{\mathbf{V}}
\begin{document}
%
\title{Coded Caching for Networks with the Resolvability Property}

\author{\IEEEauthorblockN{Li Tang and Aditya Ramamoorthy}
\IEEEauthorblockA{Department of Electrical and Computer Engineering\\
Iowa State University\\
Ames, IA 50010\\
Emails:\{litang, adityar\}@iastate.edu}
}


%


\maketitle

\begin{abstract}

Coded caching is a recently proposed technique for dealing with large scale content distribution over the Internet. As in conventional caching, it leverages the presence of local caches at the end users. However, it considers coding in the caches and/or coded transmission from the central server and demonstrates that huge savings in transmission rate are possible when the server and the end users are connected via a single shared link. In this work, we consider a more general topology where there is a layer of relay nodes between the server and the users, e.g., combination networks studied in network coding are an instance of these networks. We propose novel schemes for a class of such networks that satisfy a so-called resolvability property and demonstrate that the performance of our scheme is strictly better than previously proposed schemes.
\end{abstract}


%
\IEEEpeerreviewmaketitle

\section{Introduction}
Caching is a popular technique for facilitating content delivery over the Internet. It exploits local cache memory that is often available at the end users for reducing transmission rates from the central server. In particular, when the users request files from a central server, the system first attempts to satisfy the user demands in part from the local content. Thus, the overall rate of transmission from the server is reduced which in turn reduces overall network congestion. The work of \cite{Ma14} demonstrated that huge rate savings are possible when coding in the caches and coded transmissions from the server to the users are considered. This problem is referred to as coded caching. 

In \cite{Ma14}, the scenario considered was as follows. There is a server that contains $N$ files, a collection of $K$ users that are connected to the central server by a single shared link. Each user also has a local cache of size $M$. The focus is on reducing the rate of transmission on the shared link. There are two distinct phases in the coded caching setting. In the placement phase, the caches of the users are populated. This phase should not depend on the actual user requests, which are assumed to be arbitrary. The scheme in \cite{Ma14} operates by dividing each file into a large number of subfiles; this will be referred to as the subpacketization level.

In this work, we consider the coded caching problem in a more general setting where there is a layer of relay nodes between the server and the users (see \cite{JJ15} for related work). Specifically, the server is connected to a set of relay nodes and the users are connected to certain subsets of the relay nodes. A class of such networks have been studied in network coding and are referred to as ``combination networks" \cite{RWY04}. Specifically, in a combination network there are $h$ relay nodes and the $\binom{h}{r}$ users each of which is connected to a $r$-subset of the the relay nodes. Combination networks were the first example where an unbounded gap between network coding and routing for the case of multicast was shown. While this setting is still far from an arbitrary network connecting the server and the users, it is rich enough to admit several complex strategies that may shed light on coded caching for general networks.

In this work we consider a class of networks that satisfy a so called resolvability property. These networks include combination networks where $r$ divides $h$, but there are many other examples.  We propose a coded caching scheme for these networks and demonstrate its advantages.
\subsection{Main contributions}
\begin{itemize}
\item Our schemes work for any network that satisfies the resolvability property. For a class of combination networks, we demonstrate that our achievable rates are strictly better than those proposed in prior work.
\item The subpacketization level of our scheme is also significantly lower than competing methods. As discussed in Section \ref{sec:prob_form}, the subpacketization level of a given scheme directly correlates with the complexity of implementation.
\end{itemize}

This paper is organized as follows. Section \ref{sec:prob_form} presents the problem formulation and background. In Section \ref{sec:prop_scheme}, we describe our proposed coded caching scheme, Section \ref{sec:perf_anal} presents a performance analysis and comparison and Section \ref{sec:concl} concludes the paper.

\section{Problem Formulation and Background}
\label{sec:prob_form}
In this work we consider a class of networks that contain an intermediate layer of nodes between the main server and the end user nodes. Combination networks are a specific type of such networks and have been studied in some depth in the literature on network coding \cite{RWY04}. However, as explained below, we actually consider a larger class of networks that encompass combination networks.

The networks we consider consist of a server node denoted $\calS$ and $h$ relay nodes, $\Gamma_1, \Gamma_2, \dots, \Gamma_h$ such that the server is connected to each of the relay nodes by a single edge. The set of relay nodes is denoted by $\mathcal H$. Let $[m] = \{1, 2, \dots, m\}$. If $A \subset [h]$ we let $\Gamma_A = \cup_{i\in A} \{\Gamma_i\}$.
There are $K$ users in the system and each user is connected to a subset of $\mathcal H$ of size $r$. Let $\calV \subset \{1, \dots, h\}$ with $|\calV|=r$. For convenience, we will assume that the set $\calV$ is written in ascending order even though the subset structure does not impose any ordering on the elements. Under this condition, we let $\calV[i]$ represent the $i$-th element of $\calV$. For instance, if $\calV =\{1,3\}$, then $\calV[1] = 1$ and $\calV[2] = 3$. Likewise, $\mbox{Inv}-\calV$ will denote the corresponding inverse map, i.e. $\mbox{Inv}-\calV[i] = j$ if $\calV[j] = i$.

Each user is labeled by the subset of relay nodes it is connected to. Thus, $U_{\calV}$ denotes the user that is connected to $\Gamma_{\calV}$. The set of all users is denoted $\calU$ and the set of all subsets that specify the users is denoted $\bfV$, i.e., $\calV \in \bfV$ if $U_{\calV}$ is a user. We consider networks where $\bfV$ satisfies the resolvability property that is defined below.
\begin{definition} {\it Resolvability property.} The set $\bfV$ defined above is said to be resolvable if there exists a partition of $\bfV$ into subsets $\calP_1, \calP_2, \dots, \calP_{\tilde{K}}$ such that 
\begin{itemize}
\item for any $i \in [\tilde{K}]$ if $\calV \in \calP_i$ and $\calV' \in \calP_i$, then $\calV \cap \calV' = \emptyset$, and
\item for any $i \in [\tilde{K}]$, we have $\cup_{\calV : \calV \in \calP_i} \calV = [h]$.
\end{itemize}
The subsets $\calP_i$ are referred to as the parallel classes of $\bfV$.
\end{definition}
Each relay node $\Gamma_i$ is thus connected to a set of users that is denoted by $\calN(\Gamma_i)$. A simple counting argument shows that $|\calN(\Gamma_i)| = Kr/h = \tilde{K}$.

Suppose that $r$ divides $h$ and let $\bfV$ be the set of all subsets of size $r$ of $[h]$. In this case, the network defined above is the combination network \cite{JJ15} with $K=\binom{h}{r}$ users. The fact that this network satisfies the resolvability property is not obvious and follows from a result of \cite{BA75}.
\begin{example}
\label{eg:combnet}
The combination network for the case of $h=4, r=2$ is shown in Fig. \ref{Fig:DirectedGraph} and the corresponding parallel classes are
\begin{align*}
\calP_1 &= \{\{1,2\},\{3,4\}\},\\
\calP_2 &= \{\{1,3\},\{2,4\}\}, \text{~and}\\
\calP_3 &= \{\{1,4\},\{2,3\}\}.
\end{align*}
\end{example}
On the other hand, there are other networks where $|\bfV|$ is strictly smaller than $\binom{h}{r}$.
\begin{example}
\label{eg:gen_resolv_net}
Let $h=9, r=3$ and let $\bfV = \{\{1,2,3\},\{4,5,6\}, \{7,8,9\}, \{1,4,7\}, \{2,5,8\}, \{3,6,9\}\}$. In this case, the parallel classes are
\begin{align*}
\calP_1 &= \{\{1,2,3\},\{4,5,6\}, \{7,8,9\}\}, \text{~and}\\
\calP_2 &= \{\{1,4,7\}, \{2,5,8\}, \{3,6,9\}\}.
\end{align*}
\end{example}
We discuss generalizations of these examples in Section III.

The server $\calS$ contains a library of $N$ files where each file is of size $F$ bits (we will interchangeably refer to $F$ as the subpacketization level). The files are represented by random variables $W_i, i =1, \dots, N$, where $W_i$ is distributed uniformly over the set $[2^F]$. Each user has a cache of size $MF$ bits. There are two distinct phases in the coded caching problem. In the placement phase, the content of the user's caches is populated. This phase should not depend on the file requests of the users. In the delivery phase, user $U_{\calV}$ requests a file denoted $W_{d_{\calV}}$ from the library; the set of all user requests is denoted $\calD = \{W_{d_\calV}: U_{\calV} \text{~is a user}\}$. It can be observed that there are a total of $N^K$ distinct request sets. The server responds by transmitting a certain number of bits that satisfies the demands of all the users. A $(M,R_1,R_2)$ caching system also requires the specification of the following encoding and decoding functions.
\begin{itemize}
\item {\it $K$ caching functions:} $Z_{\calV} = \phi_{\calV}(W_1, \dots, W_N)$ which represents the cache content of user $U_{\calV}$. Here, $\phi_{\calV}: [2^{NF}] \rightarrow [2^{MF}]$.
\item {\it $h N^K$ server to relay encoding functions:} The signal $\psi_{\calS \rightarrow \Gamma_i, \calD}(W_1, \dots, W_N)$ is the encoding function for the edge from $\calS$ to relay node $\Gamma_i$. Here, $\psi_{\calS \rightarrow \Gamma_i, \calD}: [2^{NF}] \rightarrow [2^{R_1 F}]$, so that the rate of transmission on server to relay edges is at most $R_{1}$. The signal on the edge is denoted $X_{\calS \rightarrow \Gamma_i}$.
\item {\it $h \tilde{K} N^K$ relay to user encoding functions:} Let $U_\calV \in \calN(\Gamma_i)$. The signal $\varphi_{\Gamma_i \rightarrow U_\calV, \calD}(\psi_{\calS \rightarrow \Gamma_i, \calD}(W_1, \dots, W_N))$ is the encoding function for the edge $\Gamma_i \rightarrow U_\calV$. Here $\varphi_{\Gamma_i \rightarrow U_\calV, \calD}:[2^{R_{1} F}] \rightarrow [2^{R_{2} F}]$, so that the rate of transmission on the relay to user edges is at most $R_2$. The signal on the corresponding edge is denoted $X_{\Gamma_i \rightarrow U_\calV}$, which is assumed to be defined only if $U_\calV \in \calN(\Gamma_i)$.
\item {\it $K N^K$ decoding functions:} Every user has a decoding function for a specific request set $\calD$, denoted by $\mu_{\calD,U_\calV}(X_{\Gamma_{\calV[1]} \rightarrow  U_\calV}, \dots, X_{\Gamma_{\calV[r]} \rightarrow  U_\calV})$. Here $\mu_{\calD,U_\calV}:[2^{R_{1} F}] \times [2^{R_{2} F}] \times \dots [2^{R_{2} F}] \rightarrow [2^F]$. The decoded file is denoted by $\hat{W}_{\calD,U_\calV}$.
\end{itemize}
For this coded caching system, the probability of error $P_e$ is defined as
$P_e = \max_{\calD} \max_{\calV} P(\hat{W}_{\calD,U_\calV} \neq W_{\calD})$.

\begin{figure}[t]
 \centering
        \includegraphics[scale=0.64]{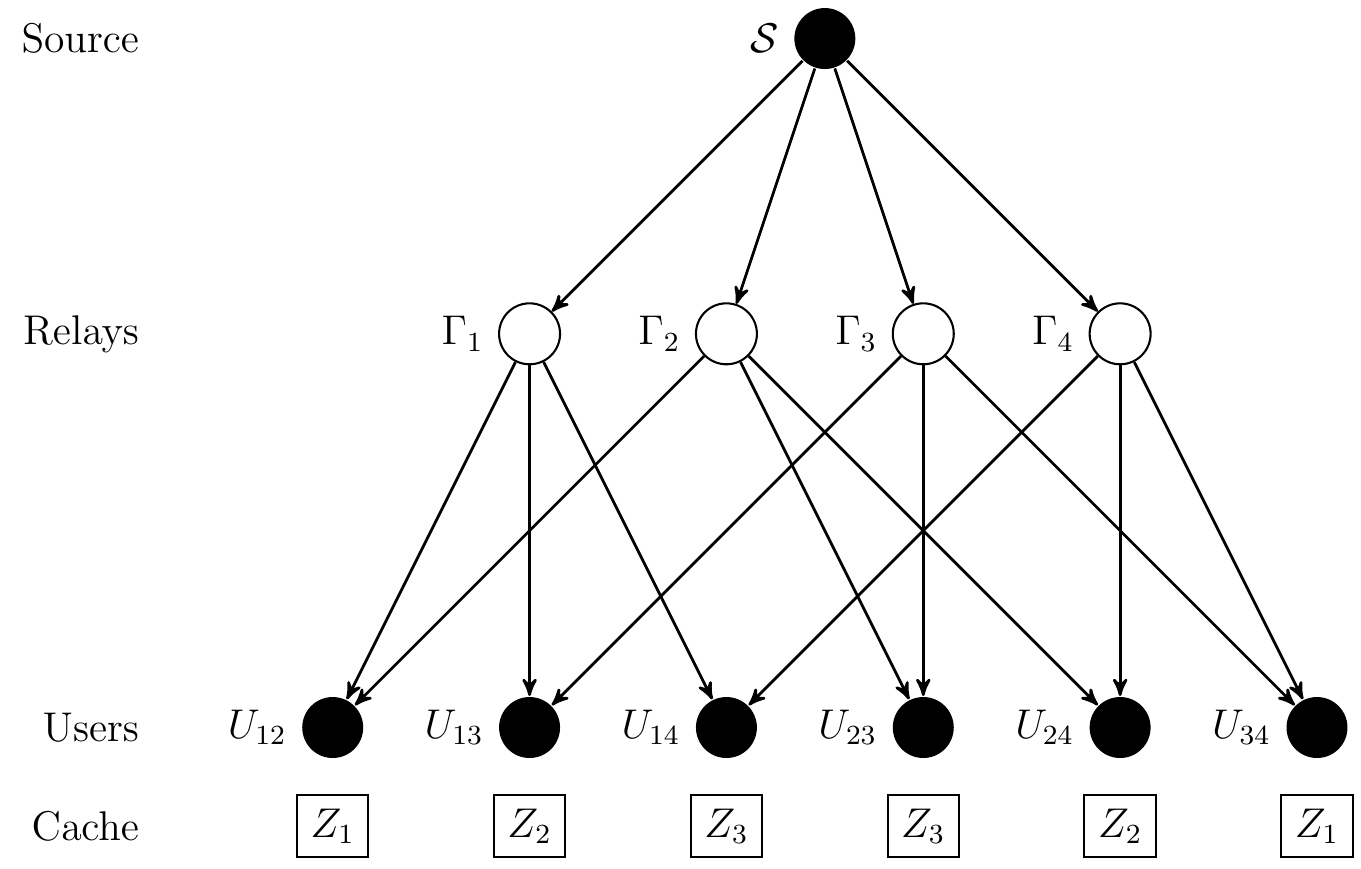}
        \caption{The figure shows a $\binom{4}{2}$ combination network. It also shows the cache placement when $M=2, N=6$. Here, $Z_1 = \cup_{n=1}^6 \{W_{n, 1}^1, W_{n, 1}^2\}, Z_2 = \cup_{n=1}^6 \{W_{n, 2}^1, W_{n, 2}^2\}$ and $Z_3 = \cup_{n=1}^6 \{W_{n, 3}^1, W_{n, 3}^2\}$. It can be observed that each relay node sees the same caching pattern in the users that it is connected to, i.e., the users connected to each $\Gamma_i$ together have $Z_1, Z_2$ and $Z_3$ represented in their caches.}
         \label{Fig:DirectedGraph}
\vspace{-0.2in}
\end{figure}

The triplet $(M,R_1, R_2)$ is said to be achievable if for every $\epsilon > 0$ and every large enough file size $F$ there exists a $(M,R_1,R_2)$ caching scheme with probability of error less than $\epsilon$.  The subpacketization level of a scheme, i.e., $F$ is also an important metric, because it is directly connected to the implementation complexity of the scheme. For example, the original scheme of \cite{Ma14} that operates when there is a shared link between the server and the users. It operates with a subpacketization level $F \approx \binom{K}{KM/N}$ which grows exponentially with $K$. Thus, the scheme of \cite{Ma14} is applicable when the files are very large. In general, lower subpacketization levels for a given rate are preferable.

Prior work has presented two coded caching schemes for combination networks. In the routing scheme, coding is not considered. Each user simply caches a $M/N$ fraction of each file. In the delivery phase, the total number of bits that need to be transmitted is $K(1-M/N)F$. As there are $h$ outgoing edges from $\calS$, we have that $R_1 = \frac{K}{h}(1-\frac{M}{N})$. Moreover, as there are $r$ incoming edges into each user, $R_2 = \frac{1}{r}(1-\frac{M}{N})$. An alternate scheme, called the CM-CNC scheme was presented in \cite{JJ15} for combination networks. This scheme uses a decentralized cache placement phase, where each user randomly caches a $M/N$ fraction of each file. In the delivery phase, the server executes the CM step where it encodes all requested files by a decentralized multicasting caching scheme. Following this, in the CNC step, the server divides each coded signal into $r$ equal-size signals and encodes these $r$ signals by a $(h,r)$ binary MDS code. The $h$ coded signals are transmitted over the $h$ links from the server to relay nodes. The relay nodes forward the signals to the users. Thus, each user receives $r$ coded signals and can recover its demand due to MDS property. 

\section{Proposed Caching Scheme}
\label{sec:prop_scheme}
Consider a network where $\bfV$ satisfies resolvability property and let the parallel classes be $\mathcal P_1,\cdots, \mathcal P_{\tilde{K}}$. It is evident that each user belongs to exactly one parallel class. Let $\Delta(\calV)$ indicate the parallel class that a user belongs to, i.e., $\Delta(\calV) = j$ if user $U_\calV$ belong to $\calP_j$. Now, recall that $\calN(\Gamma_i)$ is the set of users $U_\calV$ such that $i \in \calV$. By the resolvability property it has to be the case that each user in $\calN(\Gamma_i)$ belongs to a different parallel class. In fact, it can be observed that
\begin{align}
\label{eq:subnet_classes}
\{\Delta(\calV) : U_\calV \in \calN(\Gamma_i) \} = [h], \text{~for all $i$.}
\end{align}
This implies that each relay node ``sees" exactly the same set of parallel classes represented in the users that it is connected to. This observation inspires our placement phase in the caching scheme.  We populate the user caches based on the parallel class that a given user belongs to. Loosely speaking, it turns out that we can design a symmetric uncoded placement such that the overall cache content seen by every relay node is the same.

\begin{algorithm}[t]
\caption{Coded Caching in Networks Satisfying Resolvability Property}
\label{Alg:CachingCombination}
\begin{algorithmic}
      \STATE {\bfseries 1. procedure:} PLACEMENT PHASE 

      \FOR{$i=1$ {\bfseries to} $N$}
      \STATE Partition $W_i$ into $(W_{n,\mathcal T}^l:\mathcal T\subset[\tilde{K}], |\mathcal T|=t, l\in [r])$
      \ENDFOR
      \FOR{$\calV \in \mathbf{V}$}
      \STATE $U_{\mathcal V}$ caches $W_{n,\mathcal T}^l$ if $\Delta(\calV) \in \mathcal T$ for  $l\in [r]$ and $n \in [N]$.
      \ENDFOR
   \STATE {\bfseries end procedure}
   \STATE

   \STATE {\bfseries 2.procedure:} DELIVERY PHASE
   \FOR{$i=1$ {\bfseries to} $h$}
    \STATE $\Theta \leftarrow \{C: C\subset [\tilde{K}], |C|=t+1\}$
    \STATE Source sends $\{\oplus_{\{\calV:\Delta(\calV)\in C\}} W_{d_{\mathcal V}, {C\setminus \{\Delta(\calV)\} }}^{\mbox{Inv}-\calV[i]}: i\in \mathcal V, C\in \Theta \}$ to $\Gamma_i$
    \FOR{$j=1$ {\bfseries to} $\tilde{K}$}
     \STATE  $\Gamma_i$ forwards $\{\oplus_{\{\calV:\Delta(\calV)\in C\}} W_{d_{\mathcal V}, {C\setminus \{\Delta(\calV)\} }}^{\mbox{Inv}-\calV[i]}: i \in \mathcal V, C\in \Theta, \Delta(\calV)\in C\}$ to $U_{\mathcal V}$
    \ENDFOR
   \ENDFOR
      \STATE {\bfseries end procedure}

\end{algorithmic}
\end{algorithm}

Our proposed placement and delivery phase schemes are formally specified in Algorithm \ref{Alg:CachingCombination} and are discussed below.
Assume each user has a storage capacity of $M\in\{0,\frac{N}{\tilde{K}},\frac{2N}{\tilde{K}},\cdots,N\}$ files, and let $t=\frac{\tilde{K}M}{N}=\frac{KrM}{hN}$. The users can be partitioned into $\tilde{K}$ groups $\mathcal G_i$ where $\mathcal G_i=\{U_\mathcal V:\mathcal V\in \mathcal P_i\}$.

In placement phase, each file $W_n$ is split into $r\binom{\tilde{K}}{t}$ non-overlapping subfiles of equal size that are labeled as
$$
W_n=(W_{n,\mathcal T}^l:\mathcal T\subset[\tilde{K}], |\mathcal T|=t, l\in [r]).
$$
Thus, the subpacketization mechanism is such that each subfile has a superscript in $[r]$ in addition to the subset-based subscript that was introduced in the work of \cite{Ma14}.

Subfile $W_{n,\mathcal T}^l$ is placed in the cache of the users in $\mathcal G_i$ if $i\in \mathcal T$. Equivalently, $W_{n,\mathcal T}^l$ is stored in user $U_{\mathcal V}$ if $\Delta(\calV)\in \mathcal T$. Thus, each user caches a total of $Nr\binom{\tilde{K}-1}{t-1}$ subfiles, and each subfile has size $\frac{F}{r\binom{\tilde{K}}{t}}$. This requires
$$
Nr\binom{\tilde{K}-1}{t-1}\frac{F}{r\binom{\tilde{K}}{t}}=F\frac{Nt}{\tilde{K}}=MF
$$
bits, demonstrating that our scheme uses only $MF$ bits of cache memory at each user.

\begin{example}
\label{eg:combnet_with_M_2}
Consider the combination network in Example \ref{eg:combnet} with $M=2$, $N=6$ and $K=6$. In the placement phase, each file is partitioned into six subfiles $W_{n, i}^l$, $i=1,2,3$, $l=1,2$.
The cache placement is as follows.
\begin{align*}
\mathcal G_1=\{U_{12},U_{34}\} \text{~cache~}& W_{n, 1}^1, W_{n, 1}^2;\\
\mathcal G_2=\{U_{13},U_{24}\} \text{~cache~}& W_{n, 2}^1, W_{n, 2}^2; \text{~and}\\
\mathcal G_3=\{U_{14},U_{23}\} \text{~cache~}& W_{n, 3}^1, W_{n, 3}^2.
\end{align*}
\end{example}

Note that by eq. (\ref{eq:subnet_classes}), we have that each relay node is connected to a user from each parallel class. Our placement scheme depends on the parallel class that user belongs to. Thus, it ensures that the overall distribution of the cache content seen by each relay node is the same. This can be seen in Fig. \ref{Fig:DirectedGraph} for the example considered above. We note here that the routing scheme ({\it cf.} Section \ref{sec:prob_form}) is also applicable for this placement.


Now, we briefly outline the main idea of our achievable scheme. Our file parts are of the form $W_{n,\calT}^{j}$, where for a given $\calT$, $j \in [r]$. Note that each user is also connected to $r$ different relay nodes in $\calH$. Our proposed scheme is such that each user recovers a missing file part with a certain superscript from one of the relay nodes it is connected to. In particular, we convey enough information from the server to the relay nodes such that each relay node uses the scheme proposed in \cite{Ma14} for one set of superscripts. Crucially, the symmetrization afforded by the placement scheme, allows each relay node to operate in this manner. 


\begin{theorem}
	\label{Theo:Achievable}
	Consider a network satisfying resolvability property with $h$ relay nodes and $K$ users such that each user is connected to $r$ relay nodes. Suppose that the $N$ files in the server and each user has cache of size $M\in\{0,\frac{Nh}{Kr},\frac{2Nh}{Kr},\cdots,N\}$. Then, the following rate pair $(R_1, R_2)$ is achievable.
\begin{align}
	R_1 &=\min{\left\{\frac{K(1-\frac{M}{N})}{h(1+\frac{KrM}{hN})}, \frac{N}{r}(1-\frac{M}{N})\right\}}, \label{eq:r1_rate}\\
	R_2 &=\frac{1-\frac{M}{N}}{r}. \nonumber
\end{align}
For general $0\le M\le N$, the lower convex envelope of these points is achievable.
\end{theorem}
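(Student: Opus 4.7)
The plan is to verify the five claims implicit in the statement: (i) the placement in Algorithm~\ref{Alg:CachingCombination} obeys the $MF$-bit cache constraint (already shown in the text preceding the theorem); (ii) the delivery phase of the algorithm is decodable at every user; (iii) the resulting rates match $R_2 = (1-M/N)/r$ and the first expression in~(\ref{eq:r1_rate}); (iv) the second expression in~(\ref{eq:r1_rate}) is achievable by an alternative scheme with the same $R_2$; and (v) general $M$ follows by memory sharing between corner points.

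The main step is (ii), decodability. Fix a user $U_\calV$; it must recover $W_{d_\calV,\calT}^l$ for every $t$-subset $\calT \subset [\tilde{K}]$ with $\Delta(\calV) \notin \calT$ and every $l \in [r]$. For each such pair $(\calT,l)$, I would set $C = \calT \cup \{\Delta(\calV)\}$, a $(t+1)$-subset of $[\tilde{K}]$ containing $\Delta(\calV)$, and pick the unique $i \in \calV$ with $\mbox{Inv}-\calV[i] = l$. The signal forwarded to $U_\calV$ by $\Gamma_i$ that is indexed by $C$ is the XOR, over users $\calV' \in \calN(\Gamma_i)$ with $\Delta(\calV') \in C$, of the subfiles $W_{d_{\calV'}, C\setminus\{\Delta(\calV')\}}^{\mbox{Inv}-\calV'[i]}$. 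Resolvability~(\ref{eq:subnet_classes}) guarantees that each of the $t+1$ parallel classes in $C$ is represented by exactly one $\calV' \in \calN(\Gamma_i)$, so the XOR has exactly $t+1$ terms, one of which corresponds to $\calV' = \calV$. For every other term, the subscript $C \setminus \{\Delta(\calV')\}$ contains $\Delta(\calV)$; hence the placement rule already places the corresponding subfile in $U_\calV$'s cache, so it may be cancelled, and the residual is exactly $W_{d_\calV,\calT}^l$. Letting $(\calT,l)$ range over all valid pairs exhausts the missing subfiles.

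The rate bookkeeping in (iii) is then routine. For each relay $\Gamma_i$ the server transmits one XOR of length $F/(r\binom{\tilde{K}}{t})$ per $(t+1)$-subset of $[\tilde{K}]$, yielding server-to-relay rate $\binom{\tilde{K}}{t+1}/(r\binom{\tilde{K}}{t}) = (\tilde{K}-t)/(r(t+1))$; substituting $\tilde{K}=Kr/h$ and $t=KrM/(hN)$ produces the first expression in~(\ref{eq:r1_rate}). For a single user $U_\calV \in \calN(\Gamma_i)$, only the $\binom{\tilde{K}-1}{t}$ XORs whose indexing set $C$ contains $\Delta(\calV)$ are forwarded, giving relay-to-user rate $\binom{\tilde{K}-1}{t}/(r\binom{\tilde{K}}{t}) = (1-M/N)/r = R_2$.

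For (iv), the plan is to invoke an alternative ``file-symmetric'' scheme: each user caches the same $M/N$ fraction of every file (uncoded), and in the delivery phase the server encodes the $(1-M/N)F$ uncached bits of each of the at most $N$ distinct requested files with an $(h,r)$-MDS code, sending one coded share of length $(1-M/N)F/r$ per server-to-relay edge, which each relay forwards to the interested users. A user, receiving $r$ shares of its requested file via its $r$ relays, decodes by the MDS property, giving $R_1 = N(1-M/N)/r$ with the same $R_2$. Picking the smaller $R_1$ of the two schemes at each corner point $M \in \{0, Nh/(Kr), \ldots, N\}$ yields the stated $\min$, and classical memory sharing across these corner points delivers the lower convex envelope for general $M$. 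The principal obstacle is step (ii): keeping the cancellation argument transparent despite the nested multi-subscript notation, where the key observation is $\Delta(\calV) \in C \setminus \{\Delta(\calV')\}$ for every $\calV' \neq \calV$ appearing in the XOR, which itself follows from resolvability because distinct users in $\calN(\Gamma_i)$ must belong to different parallel classes.
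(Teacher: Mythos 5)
Your proposal is correct and follows essentially the same route as the paper: the same class-based placement, the same per-relay XOR delivery indexed by $(t+1)$-subsets $C$ with the cancellation argument via $\Delta(\calV)\in C\setminus\{\Delta(\calV')\}$, the same counting of $\binom{\tilde{K}}{t+1}$ and $\binom{\tilde{K}-1}{t}$ coded subfiles for $R_1$ and $R_2$, and memory sharing for general $M$. Your only deviation is in justifying the second term of the $\min$, where you spell out an $(h,r)$-MDS-coded variant with identical cache contents across users, which is a more explicit (and perfectly valid) argument than the paper's one-line appeal to the routing scheme when $\tilde{K}>N$.
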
	
\begin{proof}	
Let the set of user requests be denoted $\calD=\{W_{d_\calV}:U_\calV\text{~is~a~user}\}$. 
For each relay node $\Gamma_i$, we focus on the users connected to it $\calN(\Gamma_i)$ and a subset $C \subset \{\Delta(\calV): U_\calV \in \calN(\Gamma_i)\}$ where $|C|=t+1$. For each subset $C$, the server transmits
\begin{align}
\label{eq:server_tx}
	\oplus_{\{\calV:\Delta(\calV)\in C\}} W_{d_{\mathcal V}, {C\setminus \{\Delta(\calV)\}}}^{\mbox{Inv}-\calV[i]}
\end{align}
to the relay node $\Gamma_i$ ($\oplus$ denotes bitwise XOR) and $\Gamma_i$ forwards it into users $U_{\mathcal V}$ where $\Delta(\calV)\in C$.


We now argue that each user can recover its requested file. Evidently, a user $U_{\calV}$ is missing subfiles of the form $W_{d_{\mathcal V}, \mathcal{T}}^{j}$ where $\Delta(\calV) \notin \calT$. If $U_\calV$ is connected to $\Gamma_i$, it can recover the following set of subfiles using the transmissions from $\Gamma_i$.
$$
\{W_{d_{\mathcal V}, \mathcal{T}}^{\mbox{Inv}-\calV[i]}: \mathcal{T}\subset [\tilde{K}]\setminus \{\Delta(\calV)\}, |\mathcal{T}|=t\}.
$$
This is because the transmission in eq. (\ref{eq:server_tx}) is such that $U_\calV$ caches all subfiles that are involved in the XOR except the one that is interested in. This implies it can decode its missing subfile.
In addition, $U_\calV$ is also connected to $r$ relay nodes so that $\cup_{i \in \calV} \{\mbox{Inv}-\calV[i]\} = [r]$, i.e., it can recover all its missing subfiles.


Next, we determine $R_1$ and $R_2$. Each of coded subfiles results in $\frac{F}{r\binom{\tilde{K}}{t}}$ bits being sent over the link from source $\mathcal S$ to $\Gamma_i$. Since the number of subsets $C$ is $\binom{\tilde{K}}{t+1}$, the total number of bits sent from $\mathcal S$ to $\Gamma_i$ is $\binom{\tilde{K}}{t+1}\frac{F}{r\binom{\tilde{K}}{t}}=F\frac{\tilde{K}(1-\frac{M}{N})}{r(1+\frac{\tilde{K}M}{N})}$, and hence $R_1=\frac{\tilde{K}(1-\frac{M}{N})}{r(1+\frac{\tilde{K}M}{N})}$. Next, note that each coded subfile is forwarded to $|C|=t+1$ users. Thus, each user receives $\frac{|C|\binom{\tilde{K}}{t+1}}{\tilde{K}}$ coded subfiles so that the total number of bits sent from a relay node to user is  $\frac{|C|\binom{\tilde{K}}{t+1}}{\tilde{K}}\times \frac{F}{r\binom{\tilde{K}}{t}}=\frac{|C|F(\tilde{K}-t)}{r\tilde{K}(t+1)}=F\frac{1-\frac{M}{N}}{r}$. Hence $R_2=\frac{1-\frac{M}{N}}{r}$. 

Thus, the triplet $(M, R_1, R_2)=(M,\frac{\tilde{K}(1-\frac{M}{N})}{r(1+\frac{\tilde{K}M}{N})}, \frac{1-\frac{M}{N}}{r})$ is achievable for $M\in\{0,\frac{Nh}{Kr},\frac{2Nh}{Kr},\cdots,N\}$ . Points for general values of $M$ can be obtained by memory sharing between triplets of this form.
%

If $\tilde{K}>N$, it is clear that some users connecting to a given relay node $\Gamma_i$ request the same file. In this case, the routing scheme can attain $R_1=\frac{N}{r}(1-\frac{M}{N})$, which is better than the proposed scheme if $M\le 1-\frac{N}{\tilde{K}}$. This explains the second term within the minimum in the RHS of eq. (\ref{eq:r1_rate}).
\end{proof}
We illustrate our achievable scheme by considering the setup in Example \ref{eg:combnet_with_M_2}.
\begin{example}
Assume that user $U_{\mathcal V}, \calV \subset \{1, \dots, 4\}, |\calV| = 2$, requires file $W_{d_{\mathcal V}}$. The users connected to $\Gamma_1$ correspond to subsets $\{1,2\}, \{1,3\}$ and $\{1,4\}$ so that $\mbox{Inv}-\calV[1] = 1$ for all of them. Thus, the users recover missing subfiles with superscript of $1$ from $\Gamma_1$. In particular, the transmissions are as follows.
\begin{align*}
&S\to \Gamma_1: W_{d_{12},2}^1\oplus W_{d_{13},1}^1, W_{d_{12},3}^1\oplus W_{d_{14},1}^1, W_{d_{13},3}^1\oplus W_{d_{14},2}^1,\\
&\Gamma_1\to U_{12} :  W_{d_{12},2}^1\oplus W_{d_{13},1}^1, W_{d_{12},3}^1\oplus W_{d_{14},1}^1,\\
&\Gamma_1\to U_{13} :  W_{d_{12},2}^1\oplus W_{d_{13},1}^1, W_{d_{13},3}^1\oplus W_{d_{14},2}^1, \text{~and}\\
&\Gamma_1\to U_{14} :  W_{d_{12},3}^1\oplus W_{d_{14},1}^1, W_{d_{13},3}^1\oplus W_{d_{14},2}^1.
\end{align*}

The users connected to $\Gamma_2$ correspond to subsets $\{1,2\}, \{2,3\}$ and $\{2,4\}$ in which case $\mbox{Inv}-\{1,2\}[2] = 2$ while  $\mbox{Inv}-\{2,3\}[2] = 1$,  $\mbox{Inv}-\{2,4\}[2] = 1$.
Thus, user $U_{12}$ recovers missing subfiles with superscript $2$ from $\Gamma_2$ while users $U_{23}$ and $U_{24}$ recover missing subfiles with superscript $1$.
The specific transmissions are given below.
\begin{align*}
&S\to \Gamma_2: W_{d_{23},1}^1\oplus W_{d_{12},3}^2, W_{d_{12},2}^2\oplus W_{d_{24},1}^1, W_{d_{23},2}^1\oplus W_{d_{24},3}^1,\\
&\Gamma_2\to U_{12}: W_{d_{23},1}^1\oplus W_{d_{12},3}^2, W_{d_{12},2}^2\oplus W_{d_{24},1}^1,\\
&\Gamma_2\to U_{23}: W_{d_{23},1}^1\oplus W_{d_{12},3}^2, W_{d_{23},2}^1\oplus W_{d_{24},3}^1, \text{~and}\\
&\Gamma_2\to U_{24}: W_{d_{12},2}^2\oplus W_{d_{24},1}^1, W_{d_{23},2}^1\oplus W_{d_{24},3}^1.
\end{align*}
In a similar manner, the other transmissions can be determined and it can be verified that the demands of the users are satisfied and $R_1=\frac{1}{2}$, $R_2=\frac{1}{3}$.
%
%
\end{example}
It is important to note that the resolvability property is key to our proposed scheme. For example, if $r$ does not divide $h$, the combination network does not have the resolvability property. In this case, it can be shown that a symmetric uncoded placement is impossible. We demonstrate this by means of the example below.
\begin{example}
	Consider the combination network with $h=3,r=2$, and $\bfV=\{\{1,2\},\{1,3\},\{2,3\}\}$. Here $2$ does not divide $3$ and it is easy to check that it does not satisfy the resolvability property. Next, we argue that a symmetric uncoded placement is impossible by contradiction. Assume there exists a symmetric uncoded placement and suppose  $U_{12}$ caches $Z_1$, $U_{13}$ caches $Z_2$. By the hypothesis, $\Gamma_1$ and $\Gamma_2$ have to see the same cache content.  Since $\calN(\Gamma_1)=\{U_{12},U_{13}\}$ and $\calN(\Gamma_2)=\{U_{12},U_{23}\}$, $U_{23}$ has to cache $Z_2$. As a result, since $\calN(\Gamma_3)=\{U_{13},U_{23}\}$, $\Gamma_3$ sees $Z_2$ and $Z_2$, which are different from the cache content seen by  $\Gamma_1$ and $\Gamma_2$. This is a contradiction.
	
\end{example}

We emphasize that a large class of networks satisfy the resolvability property. For instance, if $r$ divides $h$, \cite{BA75} shows that the set of all $\binom{h}{r}$ $r$-subsets of an $h$-set can be partitioned into disjoint parallel classes $\mathcal P_i$, $i=1,2,\cdots,\binom{h-1}{r-1}$. More generally, one can consider resolvable designs \cite{Stinson} which are set systems that satisfy the resolvability property. Such designs include affine planes which correspond to networks where for prime $q$, we have $h = q^2$ and the $r = q$; the set $\mathbf{V}$ is given by the specification of the affine plane. Furthermore, one can obtain resolvable designs from affine geometry over $\mathbb{F}_q$ that will correspond to networks with $h = q^m$ and $r = q^d$.

\section{Performance Analysis}
\label{sec:perf_anal}
We now compare the performance of our proposed scheme with the CM-CNC scheme \cite{JJ15} and the routing scheme. For a given value of $M$ we compare the achievable $R_1, R_2$ pairs of the different schemes. Furthermore, we also compare the required subpacketization levels of the different schemes, as it directly impacts the complexity of implementation of a given scheme. Table \ref{Table:Compare} summarizes the comparison. We note here that the rate of the CM-CNC scheme is derived in \cite{JJ15} for a decentralized placement. The rate in Table \ref{Table:Compare} corresponds to a derivation of the corresponding rate for a centralized placement is lower than the one for the decentralized placement.
\begin{figure}[t]
	\centering
	\includegraphics[scale=0.6]{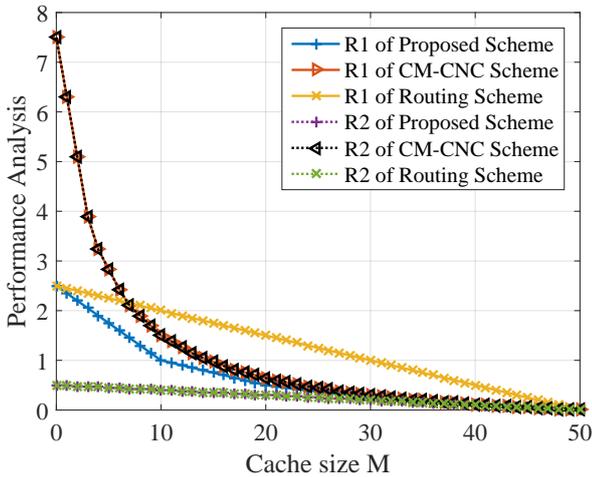}
	\caption{Performance comparison of the different schemes for a $\binom{6}{2}$ combination network with $K=15$, $\tilde{K}=5$ and $N=50$.}
	\label{Fig:R1}
\vspace{-0.2in}
\end{figure}

%
%

The following conclusions can be drawn. Let $(R_1^*, R_2^*)$ and $F^*$ denote the rates and subpacketization level of our proposed scheme.
\begin{align*}
\frac{R_1^{*}}{R_1^{CM-CNC}} &= \frac{\frac{1}{K}+\frac{M}{N}}{\frac{1}{\tilde{K}}+\frac{M}{N}} < 1, \text{~and}\\
\frac{R_2^{*}}{R_2^{CM-CNC}}&=\frac{1}{K}+\frac{M}{N}\\
             &\le \frac{N-\frac{N}{\tilde{K}}}{N}+\frac{1}{K}< 1.
\end{align*}
This implies that our scheme is unbounded better in both rate metrics.
Next,
\begin{align*}
\frac{F^{*}}{F^{CM-CNC}} \approx \exp \left\{(K(1 - \frac{r}{h}) H_{e}(\frac{M}{N}) \right\}
\end{align*}
where $H_{e}(\cdot)$ represents the binary entropy function in nats. Thus, the subpacketization level of our scheme is exponentially smaller than the scheme of \cite{JJ15}.

For a $\binom{6}{2}$ combination network with parameters  $K=15$, $\tilde{K}=5$, $N=50$, we plot the performance of the different schemes in Fig. \ref{Fig:R1}. Fig. \ref{Fig:R1} compares $R_1$ and $R_2$ of three schemes. It can be observed that for $R_1$, the proposed scheme is best for all cache size $M$. At the same time, we can see that $R_2$ of routing scheme and the proposed scheme are identical but significantly better than that of CM-CNC scheme.

It is to be noted that the scheme of \cite{JJ15} operates via a decentralized placement phase where the users cache random subsets of the bits of each file. Our proposed scheme is evidently a centralized scheme and part of the gain can be attributed to the ability to choose the cache content carefully. Nevertheless, the symmetrization of the cache content with respect to the relay nodes is a novel aspect of our work.

\begin{table}[t]\large
\centering
\begin{tabular}{|c|c|c|c|}
\hline
\hline
    & Routing & CM-CNC & New Scheme\\
    \hline
$F$ & $r\binom{\tilde{K}}{\frac{\tilde{K}M}{N}}$  & $r\binom{K}{\frac{KM}{N}}$ & $r\binom{\tilde{K}}{\frac{\tilde{K}M}{N}}$\\
\hline
$R_1$ & $\frac{K}{h}(1-\frac{M}{N})$ & $\frac{K(1-\frac{M}{N})}{r(1+\frac{KM}{N})}$ & $\frac{\tilde{K}(1-\frac{M}{N})}{r(1+\frac{\tilde{K}M}{N})}$\\
\hline
$R_2$ & $\frac{1}{r}(1-\frac{M}{N})$ & $\frac{K(1-\frac{M}{N})}{r(1+\frac{KM}{N})}$ & $\frac{1}{r}(1-\frac{M}{N})$\\
\hline
\hline
\end{tabular}
\caption{Comparison of three schemes}
	\label{Table:Compare}
\vspace{-0.2in}
\end{table}

\section{Conclusions}
\label{sec:concl}
In this work, we proposed a coding caching scheme for networks that satisfy the resolvability property. This family of networks includes a class of combination networks as a special case. The rate required by our scheme for transmission over the server-to-relay edges and over the relay-to-user edges is strictly lesser than that proposed in prior work. In addition, the subpacketization level of our scheme is also significantly lower than prior work. The generalization to networks that do not satisfy the resolvability property and to networks with arbitrary topologies is an interesting direction for future work.

\bibliographystyle{IEEE-unsorted}
\bibliography{refs}

\end{document}